\renewcommand\thmcontinues[1]{Continued}
\newtheorem{theorem}{Theorem}
\newtheorem{proposition}{Proposition}
\newtheorem{definition}{Definition}
\newtheorem{remark}{Remark}
\newtheorem{example}{Example}
\DeclareMathOperator{\dist}{dist}
\DeclareMathOperator*{\argmin}{arg\,min}
\newcommand{\mc}{\mathcal}
\newcommand{\mb}{\mathbf}
\newcommand{\dd}{\mathrm{d}}
\newcommand{\oo}{\mathrm{o}}
\newcommand{\R}{\mathds{R}}
\newcommand{\wz}{\tilde{z}}
\title{\LARGE \bf
Stability and bifurcations in transportation networks with heterogeneous users}
\author{Leonardo Cianfanelli, Giacomo Como and Tommaso Toso, \IEEEmembership{Member, IEEE} \thanks{Leonardo Cianfanelli and Giacomo Como are with Department of Mathematical Sciences G.L.~Lagrange, Politecnico di Torino, Corso Duca degli Abruzzi 24, 10129, Torino, Italy.  Email: leonardo.cianfanelli@polito.it, giacomo.como@polito.it. Tommaso Toso is with Univ. Grenoble Alpes, CNRS, Inria, Grenoble INP, GIPSA-lab, 38000 Grenoble, France. Email: tommaso.toso@gipsa-lab.fr.}
\thanks{}
}
\begin{document}

\maketitle
\thispagestyle{empty}
\pagestyle{empty}

\begin{abstract}
A critical aspect in strategic modeling of transportation systems is user heterogeneity. In many real-world scenarios, e.g., when tolls are charged and drivers have different trade-offs between time and money, or when they get informed about current congestion by different routing apps, modeling users as rational decision makers with homogeneous utility functions becomes too restrictive. While global asymptotic stability of user equilibria in homogeneous routing games is known to hold for a broad class of evolutionary dynamics, the stability analysis of user equilibria in heterogeneous routing games is a largely open problem. In this work we study the logit dynamics in heterogeneous routing games on arbitrary network topologies. We show that the dynamics may exhibit bifurcations as the noise level of the dynamics varies, and provide sufficient conditions for asymptotic stability of user equilibria.
\end{abstract}
\begin{keywords}
Transportation networks, Logit dynamics, Wardrop equilibrium, Heterogeneous routing games.
\end{keywords}

\section{Introduction}
Due to the rising congestion level of urban areas and the fast-increasing pervasiveness of novel intelligent technology that is having a huge impact on the transportation system, the analysis, design, and control of traffic networks have received renewed attention. A key aspect to be properly addressed in this research is the fact that routing apps and other information technology systems are completely reshaping users' behaviour. Given the increasing amount of available information and the selfish and often competing objectives of the users, it is natural to incorporate game-theoretic aspects in traffic models.

An important aspect in game-theoretic traffic models is concerned with the user preferences. The most popular model assumes \emph{homogeneity}, i.e., that all users make decisions based on identical utility functions, given their available information \cite{wardrop1952road}. However, this assumption may prove too restrictive to model many real-world scenarios of interest, e.g., when drivers use different routing apps \cite{wu2017informational,wu2020value}, when fuel consumption or monetary tolls constitute a non-negligible fraction of the cost and users have different trade-offs between time and money \cite{cole2003pricing,fleischer2004tolls}, or when users have different knowledge on the available routes \cite{acemoglu2018informational}. Homogeneous models have been first generalized in \cite{dafermos1972traffic} to account for heterogeneity of the utility functions.
From now on, we shall refer to \emph{heterogeneous routing games} to denote game-theoretic models that incorporate user heterogeneity, in contrast with \emph{homogeneous routing games}, which do not consider this aspect.

Besides user heterogeneity, another crucial aspect in game-theoretic models is the evolution of network flows under evolutionary dynamics, which describe how users revise their strategies. The distinction between homogeneous and heterogeneous routing games has several implications on the properties of the game, in particular on the stability of the user equilibria under evolutionary dynamics. While global asymptotic stability of user equilibria in homogeneous routing games is known to hold for a broad class of evolutionary dynamics \cite{sandholm2010population}, 
their stability in heterogeneous routing games is a largely open issue. Besides the theoretical interest, stability of equilibria has practical implications and paves the way for control applications. Indeed, since heterogeneous routing games may admit multiple user equilibria \cite{milchtaich2005topological}, understanding whether the network flows will converge to an equilibrium, and identifying which one will be selected by the dynamics in case of non-uniqueness, are fundamental questions for a system planner that aims at optimizing the transportation network performance.

In most of the literature dealing with user heterogeneity, a big effort is spent to analyse the static properties of the equilibria, but the stability of such equilibria is typically not investigated \cite{wu2017informational,wu2020value,cole2003pricing, thai2016negative,wu2019information}, implicitly assuming that the network flows converge to the equilibria of the game. However, this assumption is not always justified and requires to be further motivated. To the best of our knowledge, the only stability result in heterogeneous routing games states that a sufficient condition for global asymptotic stability of the equilibria is that the graph has parallel routes, or it is the series composition of graphs with parallel routes \cite{cianfanelli2019stability}. 

In this work we investigate the behaviour of the logit dynamics, which models users that aim at choosing optimal routes, but due to imperfect information or incomplete rationality may sometimes select suboptimal ones. We establish novel results that hold for every heterogeneous routing game, independently of the network topology. 
Our contribution is the following. We first characterize the set of fixed points of the logit dynamics (the expression \emph{fixed points} is used in this context to avoid any source of confusion between equilibrium points of the logit dynamics and user equilibria of the routing game), and prove that such a set approaches a subset of the Wardrop equilibria of the game (called \emph{limit equilibria}) in the vanishing noise limit. We then show that all the strict equilibria of the game (i.e., equilibrium flows under which every population uses a single route and all the other routes are strictly suboptimal) belong to the set of limit equilibria, and prove their local asymptotic stability under the logit dynamics. We also show that, in the large noise limit, the logit dynamics admits a globally asymptotically stable equilibrium in every heterogeneous routing games. Finally, we conduct numerical simulations to validate our theoretical results, and show that the dynamics may exhibit bifurcations as the noise varies, in contrast with the behaviour observed in homogeneous routing games.


The rest of the paper is organized as follows. In Section \ref{sec:model} we define heterogeneous routing games, introduce the logit dynamics, and discuss a motivating example. In Section \ref{sec:logit_routing} we establish our novel results on the logit dynamics in heterogeneous routing games. Finally, in the conclusive Section \ref{sec:conclusion}, we summarize the results and discuss future research lines.

\subsection*{Notation}
Let $\mathds{R}$ and $\mathds{R}_+$ denote the set of real numbers and non-negative reals. Given a finite set $\mc{X}$, we let $\mathds{R}^{\mc{X}}$ the space of real-valued vectors whose elements are indexed by $\mc{X}$, and $|\mc{X}|$ denote the cardinality of $\mc{X}$. Let $\delta^{(i)}$, $\mb{1}$, $\mb{0}$, and $\mb{I}$ denote the vector with $1$ in position $i$ and $0$ in all the other positions, the vector of all ones, the matrix of all zeros, and the identity matrix, respectively, where the size may be deduced from the context.
The distance between a point $y$ in $\R^n$ and a set $\mc X\subseteq\R^n$ is defined as 
$$\dist(y,\mc X)=\inf\{||y-x||:\,x\in\mc X\}\,,$$

\section{Model}
\label{sec:model}
In this section we define the model and discuss a motivating example. Specifically, in Section \ref{sec:heterogeneous} we describe heterogeneous routing games. Then, in Section \ref{sec:logit}, we introduce the logit dynamics and provide numerical simulations of the dynamics in a heterogeneous routing game.
\subsection{Heterogeneous routing games}
\label{sec:heterogeneous}
We model the transportation network as a directed multigraph $\mathcal{G}=(\mathcal{N},\mathcal{E})$, with node set $\mc N$ and link set $\mc E$.
We consider a finite set $\mathcal{P}$ of users populations. Let each population $p$ in $\mc P$ have an origin-destination pair $(\oo_p,\dd_p)$ in $\mc N \times \mc N$ and let $v_p \ge 0$ denote the throughput of population $p$. We then stack all throughput values in a vector $v \in \mathds{R}_+^{\mc{P}}$.
Let $\mc{R}_p$ denote the set of routes from $\oo_p$ to $\dd_p$,
\begin{equation*}
	\mc{Z}_p = \{z^p \in \mathds{R}^{\mc{R}_p}_+: \mb{1}'z^p = v_p\}
\end{equation*}
indicate the set of the admissible route flows for population $p$, and $\mc{Z}$ denote the product of such sets.
Every route flow $z \in \mc{Z}$ induces a unique link flow via
\begin{equation}
	\label{incidence}
	f=\sum_{p \in \mc{P}} A^p z^p,
\end{equation}
where $A^p \in \mathds{R}^{\mc{E} \times \mc{R}_p}$ is the link-route incidence matrix, with entries $A^p_{er}=1$ if the link $e$ belongs to the route $r$, or $0$ otherwise.
The populations differ in the origin-destination pair and in the delay functions according to which they make decisions. Let $\tau_e^p :\mathds{R}_+ \rightarrow \mathds{R}_+ $ denote the \emph{delay function} of link $e \in \mathcal{E}$ for population $p \in \mathcal{P}$, which is assumed a non-decreasing function of $f_e$ to take into account congestion effects. We also assume that $\tau_e^p \in \mc{C}^1$.
The cost of a route $r$ is defined as the sum of the delay functions of the links belonging to the route, i.e.,
\begin{equation}
	\label{cost}
	c_r^p(z)=\sum_{e \in \mathcal{E}} A_{er}^p \tau_e^p(f_e),
\end{equation}
where, given $z \in \mc{Z}$, the link flow $f$ is computed via \eqref{incidence}.

\begin{definition}
	\label{def:wardrop}
	A heterogeneous routing game is a quadruple ($\mathcal{G}$, $\mathcal{P}$, $\tau,v$), where $\tau$ is the vector collecting the delay functions of every link and population.
\end{definition}

We assume that the users behave as players in a game-theoretic setting, taking route with minimal cost. This behaviour is captured by the notion of Wardrop equilibrium.

\begin{definition}[Wardrop equilibrium] A Wardrop equilibrium is an admissible route flow $z \in \mc{Z}$ such that for every population $p \in \mathcal{P}$ and route $r \in \mathcal{R}_p$
	\begin{equation}
		\label{wardrop}
		z^p_r > 0 \ \Rightarrow \ c^p_r(z) \le c^p_q(z) \quad  \forall q \in \mathcal{R}_p.
	\end{equation}
A Wardrop equilibrium $z$ is called \emph{strict} if $z^p=v_p \delta^{(r)}$ for a route $r \in \mc{R}_p$ and $c^p_r(z) < c^p_s(z)$ for every $s \in \mc{R}_p \setminus \{r\}$.
\end{definition}

In other words, under Wardrop equilibrium flow, no user can unilaterally decrease her cost by changing route, because every used route by a population is optimal for that population. An equilibrium is called strict if every population uses one route only and the other routes are strictly suboptimal. 
We let $\mc{Z}^*$ and $\mc{Z}_s^*$ denote the set of Wardrop equilibria and strict Wardrop equilibria of a routing game, respectively. 
	It is proved in \cite[Theorem 2.1.1]{sandholm2010population} that $\mc Z^*$ is never empty, i.e., there exists at least a Wardrop equilibrium. Moreover, standard arguments allow to state that $\mc Z^*$ is also compact.

\subsection{Logit dynamics}
\label{sec:logit}
While the description made so far is completely static, we now endow routing games with evolutionary dynamics.
These are continuous-time dynamical systems that describe how users revise their decisions.
In this work we focus on the \emph{logit dynamics}.
The logit dynamics arises from the mean-field limit (in the spirit of Kurtz's theorem \cite{Kurtz1981}) of the noisy best response dynamics of classical game theory, which describes users that aim at choosing optimal routes, but sometimes select suboptimal ones due to the presence of noise. Formally, the logit dynamics reads, for every $p \in \mc{P}$ and $i \in \mc{R}_p$,
\begin{equation}
	\dot{z}_i^p = v_p\frac{\exp(- c_i^p(z)/\eta))}{\sum_{j \in \mathcal{R}_p} \exp(-c_j^p(z)/\eta)} - z_i^p,
	\label{eq:logit}
\end{equation}
where $\eta \in(0,+\infty)$ is the \emph{noise level}. We refer to \emph{logit($\eta$)} to denote the continuous-time dynamical system \eqref{eq:logit} for a given value of $\eta$. The value of $\eta$ describes how suboptimal the choices of the users are.
As $\eta \to +\infty$, the users select routes with uniform probability distribution, independently of the route cost, i.e.,
\begin{equation*}
	\dot{z}^p_i=\frac{v_p}{|\mc{R}_p|}-z^p_i.
\end{equation*}
As the noise $\eta$ decreases, the users tend to assign a larger probability to routes with smaller cost. 
In the limit of vanishing $\eta$, the dynamics converge to the best response dynamics,
where users sample uniformly random among the optimal routes and choose suboptimal ones with zero probability.

It is known that, for homogeneous routing games, i.e., when $|\mc P|=1$, for every $\eta > 0$ the logit dynamics admit a globally asymptotically stable fixed point $z_{\eta}$ and that such $z_{\eta}$ converges to the set of Wardrop equilibria as $\eta$ tends to vanish, i.e., 
$$\lim_{\eta\to 0^+}\dist(z_{\eta},\mc Z^*)=0\,.$$
In contrast, the following example illustrates how much more complex behaviors can emerge in case of heterogeneous congestion games, i.e., when $|\mc P|\ge2$. 

\begin{example}
\label{sec:example}
\begin{figure}
	\centering
	\begin{tikzpicture}[scale=1, transform shape]
		\node[draw, circle] (1) at (0,0)  {o};
		\node[draw, circle] (2) at (2,1)  {a};
		\node[draw, circle] (3) at (2,-1)  {b};
		\node[draw, circle] (4) at (4,0) {d};
		
		\path (1) edge [->] node [above] {$e_1$} (2);
		\path (2) edge [->, bend left = 20] node [above] {$e_2$} (4);
		\path (2) edge [->, bend right = 20] node [above] {$e_3$} (4);
		\path (1) edge [->] node [above] {$e_4$} (3);
		\path (3) edge [->, bend right = 20] node [above] {$e_6$} (4);
		\path (3) edge [->, bend left = 20] node [above] {$e_5$} (4);
	\end{tikzpicture}\\[5pt]
\resizebox{\columnwidth}{!}{
	\begin{tabular}{c|c|c|c|c|c|c|c}
		$p$ & $v_p$ & $\tau_1^p(f_1)$ & $\tau_2^p(f_2)$ & $\tau_3^p(f_3)$ & $\tau_4^p(f_4)$ & $\tau_5^p(f_5)$ & $\tau_6^p(f_6)$ \\
		\hline
		1 & 1.2 & 19+$f_1$ & 19+$f_2$ & 100 & 19+$f_4$ & 100 & 19+$f_6$ \\ 
		2 & 1 & 19+$f_1$ & 20$f_2$ & 100 & 19+$f_4$ & 21+$f_5$ & 100 \\ 
		3 & 1 & 19+$f_1$ & 100 & 21+$f_3$ & 19+$f_4$ & 100 & 20$f_6$ \\ 
	\end{tabular}}
	\caption[A heterogeneous routing games possessing multiple Wardrop equilibria]{A heterogeneous routing games possessing multiple Wardrop equilibria \cite{konishi2004uniqueness}. 
	}
	\label{fig:no_uniqueness}
\end{figure}
\begin{figure}
	\centering
	\includegraphics[width=7.5cm]{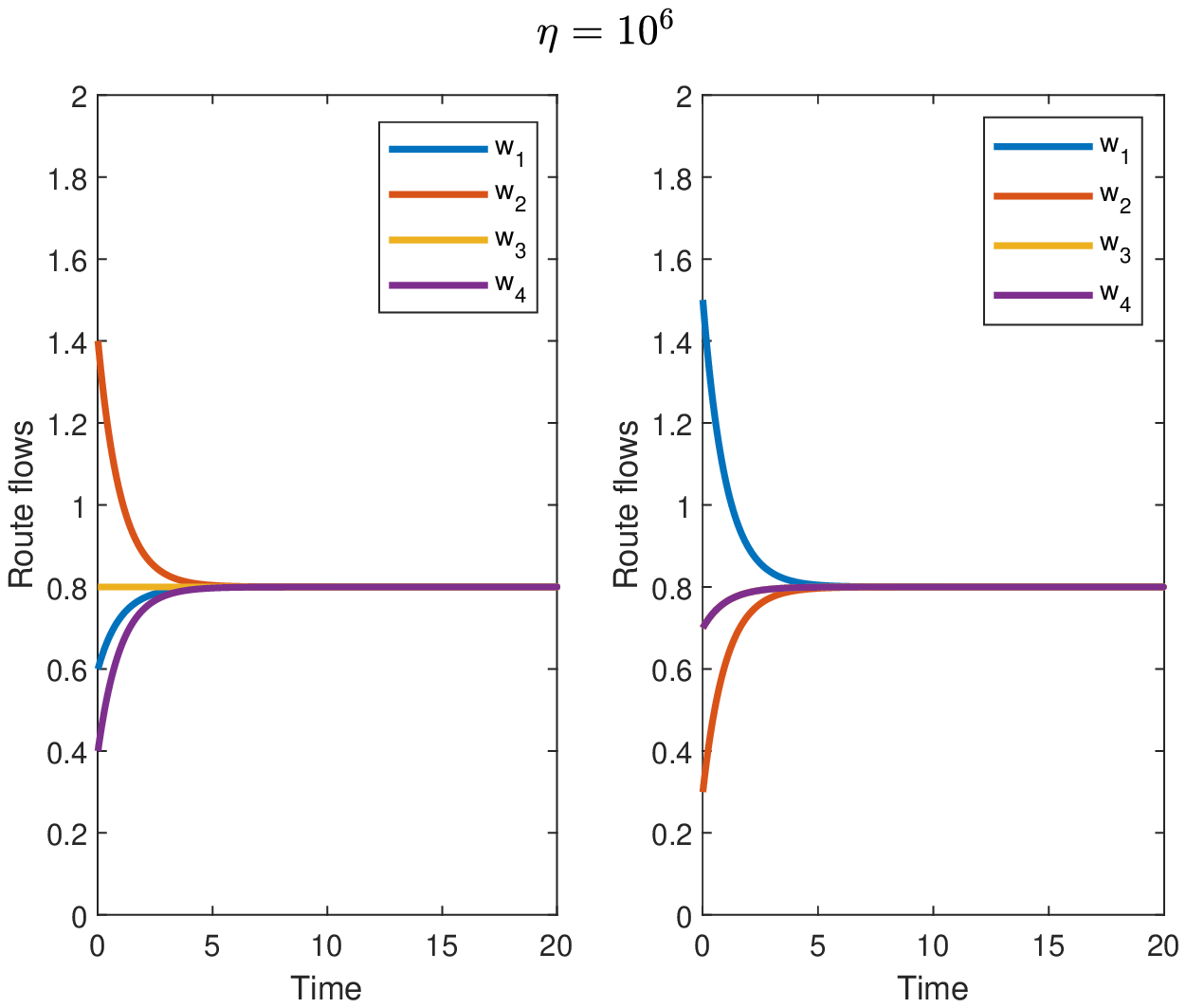}
	\includegraphics[width=7.5cm]{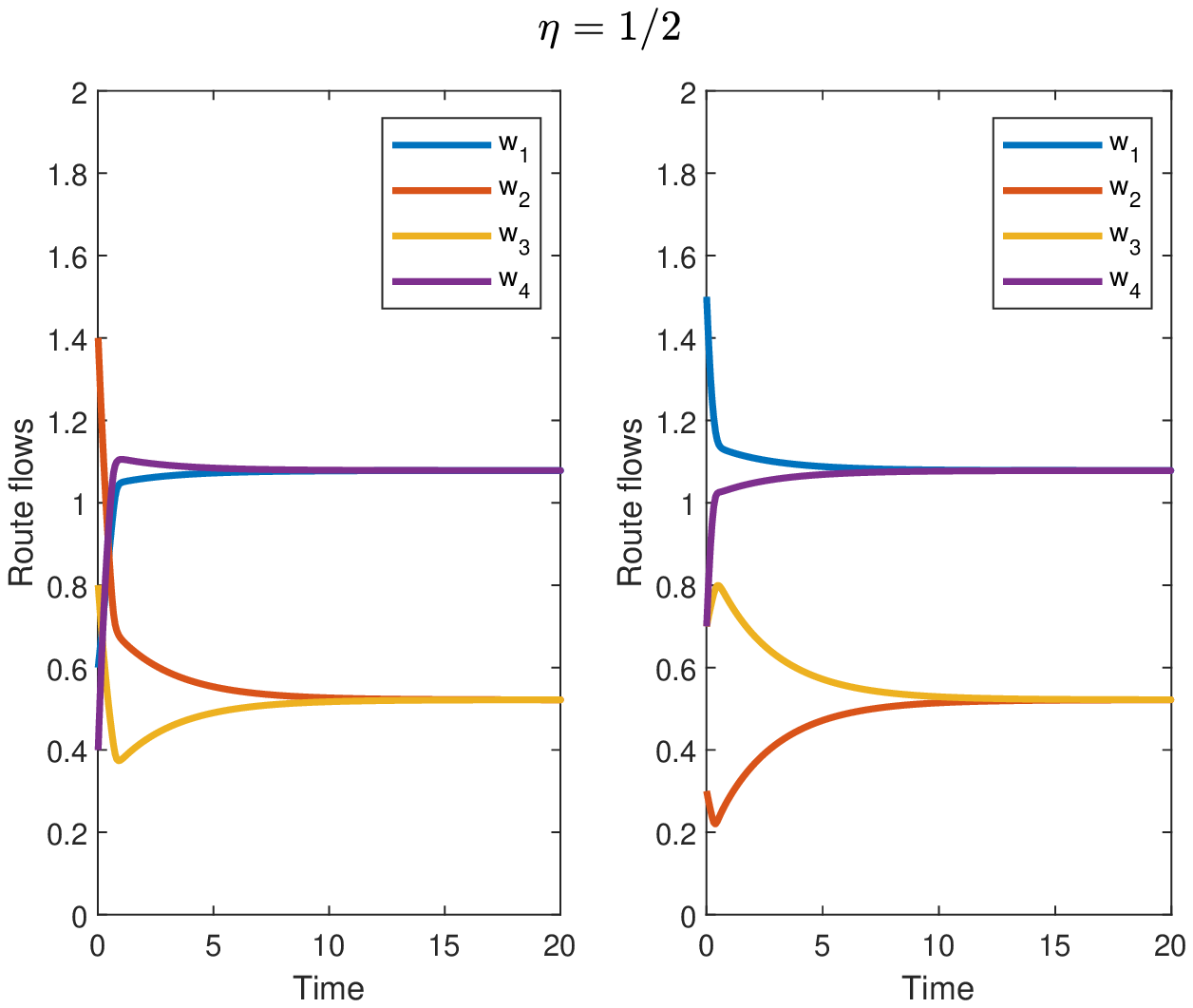}\\
	\includegraphics[width=7.5cm]{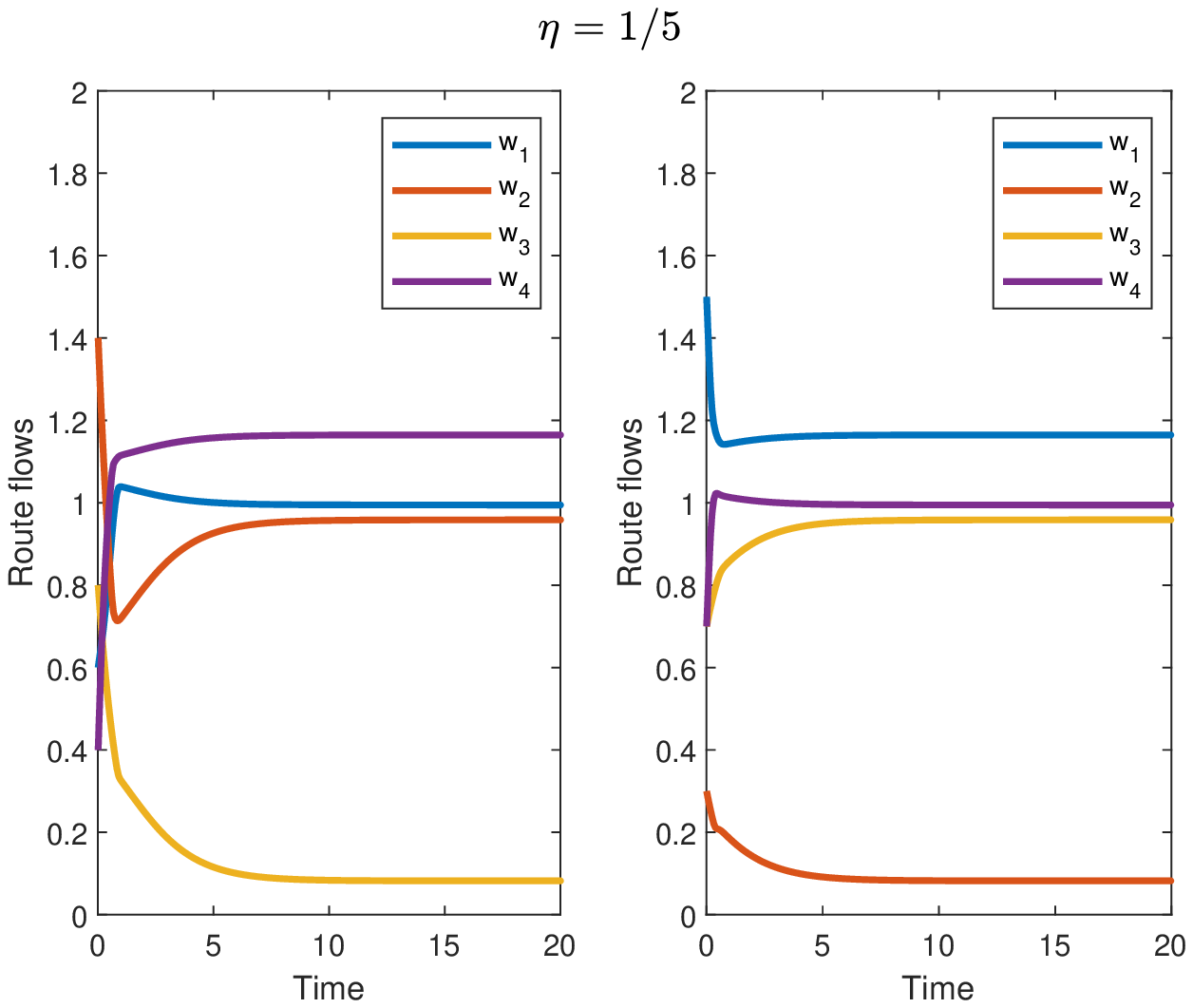}
	\includegraphics[width=7.5cm]{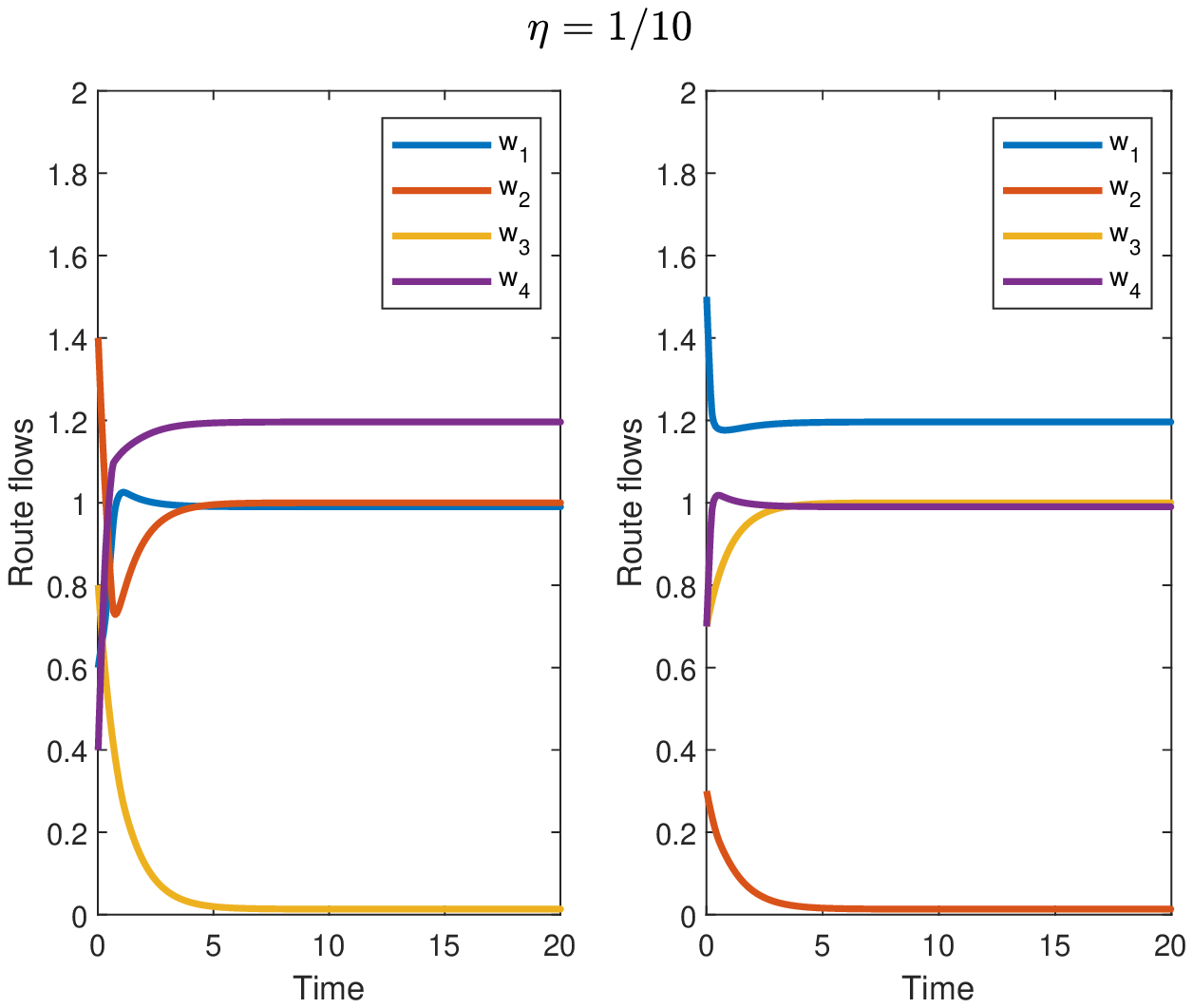}
	\caption[Numerical simulations of the logit dynamics for several noise levels]{Numerical simulations of the logit dynamics for heterogeneous routing game in Example \ref{sec:example}. For every value of $\eta$, we plot two trajectories corresponding to different initial conditions. The trajectories are projected in the space of the aggregate route flow $w$.} \label{fig:sim_eta}
\end{figure}
\begin{figure}
	\centering
	\includegraphics[width=8cm]{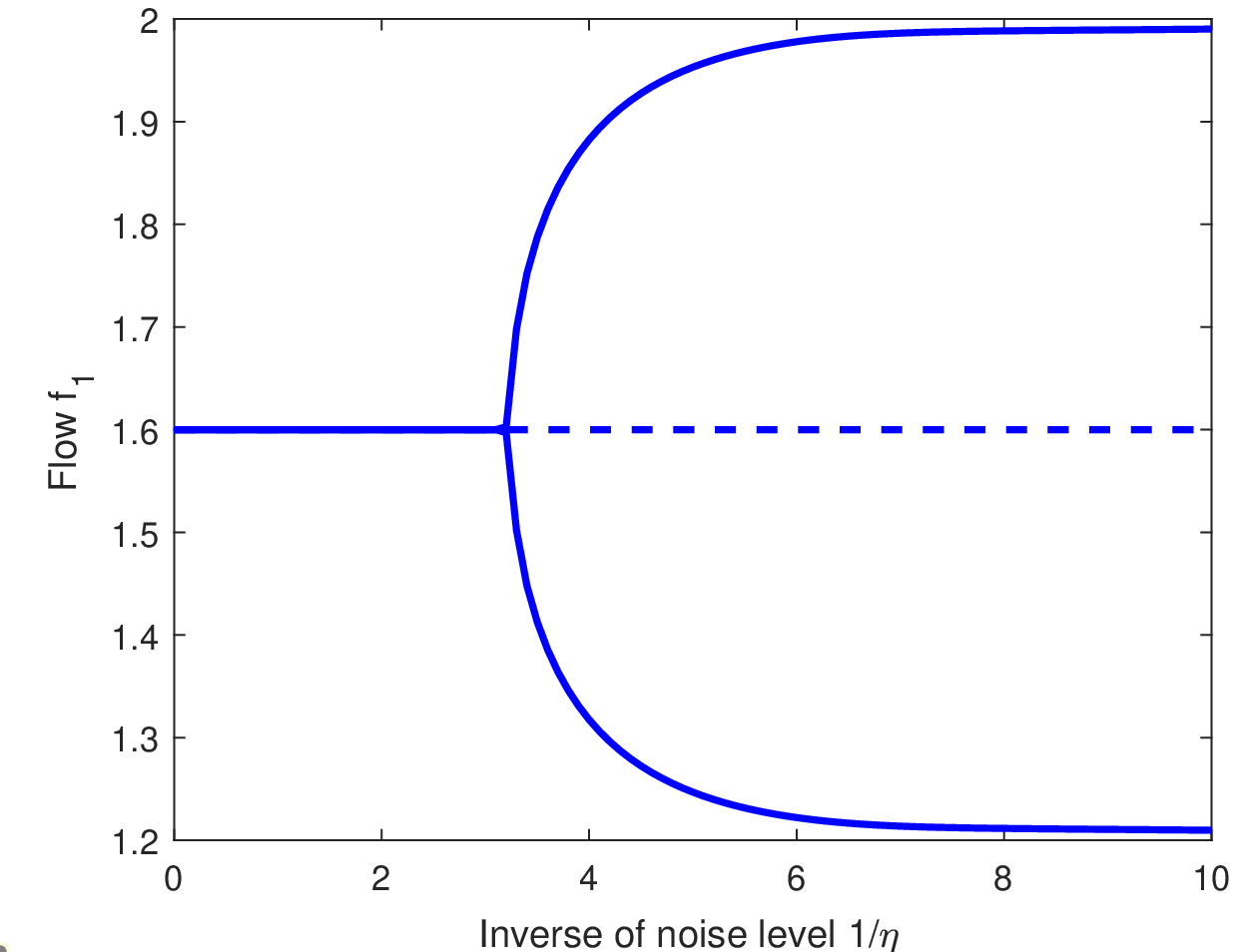}
	\caption[Bifurcation of the logit dynamics]{Bifurcation diagram of logit$(\eta)$ of Example \ref{sec:example}. For simplicity we plot the first component of link flow only, but similar diagrams may be shown for the other components. \label{fig:bifurcation}}
\end{figure}

Consider the heterogeneous routing game in Figure \ref{fig:no_uniqueness} (due to \cite{konishi2004uniqueness}).  We assume that all the populations have the same origin-destination pair $(\oo,\dd)$, and let $$r_1 = (e_1,e_2), \ r_2=(e_1,e_3), \ r_3=(e_4,e_5),\  r_4=(e_4,e_6)$$ the routes from $\oo$ to $\dd$. 
By some computations, one can prove the existence of the following Wardrop equilibria:\\
\begin{enumerate}
	\item
	$
	\begin{cases}
		z_1^1 = 1.2, \ z_4^1 = 0\\
		z_1^2 = 0, \ z_3^2= 1\\
		z_2^3 = 0, \ z_4^3= 1
	\end{cases}
	$\\[5pt]
	\item$
	\begin{cases}
		z_1^1 = 0, \ z_4^1 = 1.2\\
		z_1^2 = 1, \ z^2_3 = 0\\
		z_2^3 = 1, z_4^3 = 0
	\end{cases}
	$\\[5pt]
	\item
	$
	\begin{cases}
		z_{1}^1 = 3/5, \ z_{4}^1=3/5\\
		z^2_{1}=10/21, \ z_{3}^2=11/21\\
		z^3_{2} =11/21, \ z^3_{4}=10/21.
	\end{cases}
	$\\[5pt]
\end{enumerate}

By plugging the equilibria flows in the cost functions one can show that the first two equilibria are strict.
Figure \ref{fig:sim_eta} provides numerical simulations of the logit dynamics for this example. The simulations are conducted with four different values of $\eta$, and two trajectories corresponding to different initial conditions are illustrated, projected onto the space of the aggregate route flows $w = \sum_p z^p$ (notice that $w$ is well defined in this example, since all the populations have same origin-destination pair and route set). As $\eta = 10^6$ (large noise limit) both the trajectories converge to a fixed point in which all the populations distribute uniformly over the route set.
As the noise decreases $(\eta = 1/2)$, the asymptotic state of the system varies, but the trajectories still converge to a unique fixed point. For smaller $\eta$, the system exhibits a bifurcation. Specifically, the two trajectories converge to different fixed points, which approach the two strict equilibria of the game as $\eta$ decreases. We observe from Figure \ref{fig:bifurcation} that the system exhibits a pitchfork bifurcation. By numerical simulations one can observe that the critical value for the bifurcation is $\eta^* \simeq 0.31$. If $\eta > \eta^*$, the system admits a globally asymptotically stable fixed point. If $\eta < \eta^*$, such a fixed point becomes unstable, and two stable fixed points approaching the strict equilibria arise. The unstable fixed point converges to the third Wardrop equilibrium as $\eta$ tends to vanish, thus showing that all the Wardrop equilibria are accumulation points of sequence of fixed points of the dynamics, despite the third one being unstable. In the next section we shall provide our theoretical results, which formalize some of the observations of this section.
\end{example}

\section{Main results}
\label{sec:logit_routing}
In this section we present our main results characterizing the fixed points of the logit dynamics and their stability in heterogeneous routing games. 

Our first result shows that the set of the fixed points is non-empty and compact for every noise level, and that the fixed points approach a subset of Wardrop equilibria of the game (called \emph{limit equilibria}) in the limit of vanishing noise. Moreover, we show that every strict equilibrium belongs to the set of the limit equilibria.
In order to formulate the results properly, for every $\eta \in (0,+\infty)$ we let $\Omega_\eta \subseteq \mc{Z}$ denote the set of fixed points of logit$(\eta)$, and let $\Omega_0$ denote the set of accumulation points of convergent sequences of fixed points of the logit dynamics as the noise vanishes, i.e.,
$$\Omega_0=\left\{z\in\mc Z: \exists \ (\eta_n)_n, \eta_n \to 0, z_{n} \in\Omega_{\eta_n}, z_{n}\to z\right\}.$$


\begin{theorem}
	\label{thm:logit}
	Let $\mc{Z}^* \subseteq \mc{Z}$ be the set of Wardrop equilibria of a heterogeneous routing game, and consider the associated dynamics logit$(\eta)$ defined in \eqref{eq:logit}. Then:
	\begin{enumerate}
		\item[(i)] $\Omega_\eta$ is non-empty and compact for every $\eta > 0$;
		\item[(ii)] $\Omega_0$ is a non-empty compact subset of the Wardrop equilibria, i.e., $$\Omega_0 \subseteq \mc{Z}^*;$$
		\item[(iii)] all strict Wardrop equilibria (if any) belong to $\Omega_0$, i.e., 
		$$\mc Z _s^* \subseteq \Omega_0.$$
		Moreover, for every strict equilibrium $z^* \in \mc{Z}_s^*$, there exists $\tilde{\eta}>0$ and a family of vectors $(z_{\eta})_{\eta < \tilde\eta}\subseteq\mc Z$ such that 
		$$
		\lim_{\eta\to 0^+}z_{\eta}=z^*,
		$$ 
		with $z_{\eta}$ asymptotically stable fixed point of logit$(\eta)$. 
	\end{enumerate}
\end{theorem}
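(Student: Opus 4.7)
The overall plan is to dispatch parts (i) and (ii) by standard compactness and continuity arguments, and to tackle part (iii) --- the main obstacle --- via a contraction-mapping argument for the logit fixed-point map in a neighborhood of each strict equilibrium.

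For part (i), I rewrite the equation $\dot z=0$ as the fixed-point equation $z=H_\eta(z)$, where the components of $H_\eta$ are $H_\eta^p(z)_i=v_p\exp(-c_i^p(z)/\eta)/\sum_{j\in\mc R_p}\exp(-c_j^p(z)/\eta)$. The map $H_\eta$ is continuous and, for each $p$, $H_\eta^p(z)$ lies in the simplex $\mc Z_p$; hence $H_\eta:\mc Z\to\mc Z$, and since $\mc Z$ is compact and convex, Brouwer's theorem yields a fixed point. Compactness of $\Omega_\eta$ follows from it being the preimage of $\{0\}$ under the continuous map $z\mapsto H_\eta(z)-z$ on the compact set $\mc Z$. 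For part (ii), non-emptiness of $\Omega_0$ follows by extracting a convergent subsequence from any sequence $z_n\in\Omega_{\eta_n}$ with $\eta_n\to 0^+$ (using (i) and compactness of $\mc Z$), and compactness of $\Omega_0$ follows by a diagonal argument. The inclusion $\Omega_0\subseteq\mc Z^*$ is a classical Wardrop-limit argument: if $z_n\to z\in\Omega_0$, $\eta_n\to 0^+$, $z_i^p>0$, and there existed $j$ with $c_j^p(z)<c_i^p(z)$, then the ratio $z_{n,i}^p/z_{n,j}^p=\exp((c_j^p(z_n)-c_i^p(z_n))/\eta_n)$ would go to $0$, which together with $z_{n,j}^p\le v_p$ contradicts $z_i^p>0$.

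The core of the proof is part (iii). Fix a strict equilibrium $z^*$ with associated routes $(r_p)_{p\in\mc P}$. By continuity of the costs and strictness of $z^*$, there exist an open neighborhood $U$ of $z^*$ in $\mc Z$ and $\Delta>0$ such that $c_s^p(z)-c_{r_p}^p(z)\ge\Delta$ for all $z\in U$, $p\in\mc P$, and $s\neq r_p$. Hence the logit weights $\pi_i^p(z,\eta)$ satisfy $\pi_{r_p}^p\to 1$ and $\pi_s^p=O(e^{-\Delta/\eta})$ uniformly on $U$ as $\eta\to 0^+$. A direct computation gives
$$\frac{\partial H_\eta^p(z)_i}{\partial z^q_j}=-\frac{v_p\,\pi_i^p(z,\eta)}{\eta}\Big[\frac{\partial c_i^p}{\partial z^q_j}-\sum_{k\in\mc R_p}\pi_k^p(z,\eta)\frac{\partial c_k^p}{\partial z^q_j}\Big],$$
so each entry of the Jacobian is either suppressed by the exponentially small factor $\pi_i^p$ (when $i\neq r_p$) or by the exponentially small bracket (when $i=r_p$, the sum inside converges to $\partial c_{r_p}^p/\partial z^q_j$ at rate $e^{-\Delta/\eta}$), both of which beat the $1/\eta$ prefactor. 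Consequently the operator norm of $J_{H_\eta}$ tends to $0$ uniformly on $U$. Combined with $\|H_\eta(z^*)-z^*\|=O(e^{-\Delta/\eta})$, the triangle inequality yields invariance of a small closed ball $\overline B(z^*,r)\cap\mc Z\subset U$ under $H_\eta$ for all sufficiently small $\eta$, together with a contraction of vanishing Lipschitz constant. Banach's fixed-point theorem then produces a unique $z_\eta\in\overline B(z^*,r)\cap\mc Z$ with $H_\eta(z_\eta)=z_\eta$; since $r$ can be taken arbitrarily small, $z_\eta\to z^*$, proving $z^*\in\Omega_0$. Asymptotic stability of $z_\eta$ follows because the Jacobian of the logit vector field at $z_\eta$ equals $J_{H_\eta}(z_\eta)-I$; since $\|J_{H_\eta}(z_\eta)\|<1$ for small $\eta$, every eigenvalue of $J_{H_\eta}(z_\eta)$ lies in the open unit disk, so every eigenvalue of $J_{H_\eta}(z_\eta)-I$ has real part bounded above by $\|J_{H_\eta}(z_\eta)\|-1<0$, and Lyapunov's indirect method applies on the invariant simplex tangent space $\{w:\mb 1'w^p=0\ \forall p\}$.

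The main obstacle in this plan is precisely the Jacobian bound just described: the naive estimate $|\partial H_\eta^p(z)_i/\partial z^q_j|\sim \pi_i^p/\eta$ diverges for $i=r_p$ because $\pi_{r_p}^p\to 1$ while $1/\eta\to\infty$. The cancellation inside the bracket --- namely that $\sum_k\pi_k^p\partial c_k^p/\partial z^q_j$ converges to $\partial c_{r_p}^p/\partial z^q_j$ at the same exponential rate as $1-\pi_{r_p}^p$ --- is what rescues the contraction property, and it is this observation that, in a single stroke, yields both the existence of nearby fixed points and their local asymptotic stability.
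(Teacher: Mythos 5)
Your proposal is correct, and parts (i) and (ii) coincide with the paper's argument (Brouwer on the continuous self-map of the compact convex set $\mc Z$; closedness of the fixed-point set as a level set; the vanishing-noise limit forcing zero flow on suboptimal routes; a diagonal argument for closedness of $\Omega_0$). Part (iii), however, is organized around a genuinely different mechanism. The paper proceeds in two stages: it first builds the nested invariant sets $O_\epsilon=\{z: z^p_{r_p}\ge v_p(1-\epsilon)\}$, shows $G(\cdot,\eta)$ maps $O_\epsilon$ into itself for small $\eta$, and applies Brouwer in each $O_\epsilon$ to get $z^*\in\Omega_0$; it then separately extends $G$ to a $\mc C^1$ map $\overline G$ on $O_{\overline\epsilon}\times[0,+\infty)$ with $\overline G(\cdot,0)\equiv z^*$ and invokes the implicit function theorem at $(z^*,0)$ to obtain the branch $z_\eta$ and its stability. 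You instead run a single Banach contraction argument on a small closed ball around $z^*$: the same key estimate -- that $J_{H_\eta}\to\mb 0$ uniformly near $z^*$, thanks to the cancellation $\sum_k\pi^p_k\,\partial c^p_k/\partial z^q_j\to\partial c^p_{r_p}/\partial z^q_j$ at rate $e^{-\Delta/\eta}$, which beats the $1/\eta$ prefactor -- gives invariance of the ball, a vanishing Lipschitz constant, existence and convergence of $z_\eta$, and Hurwitzness of $J_{H_\eta}(z_\eta)-\mb I$ all at once. (This estimate is exactly the paper's limits \eqref{eq:piece1}--\eqref{eq:piece2} in a different but equivalent algebraic form.) Your route is more economical -- it dispenses with the separate $\mc C^1$ extension in $\eta$, including the paper's verification that $\partial G/\partial\eta\to 0$, which is needed for the implicit function theorem but not for a contraction -- and it yields, as a bonus, local uniqueness of the fixed point of logit$(\eta)$ near each strict equilibrium for small $\eta$, which the paper's Brouwer step does not provide. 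What the paper's IFT approach buys in exchange is a continuously differentiable branch $\eta\mapsto z_\eta$, a slightly stronger structural statement than mere convergence. The only places where your write-up is terser than it should be are the diagonal argument for closedness of $\Omega_0$ and the boundedness of the cost derivatives on $\mc Z$ (needed for the uniform Jacobian bound, and guaranteed by $\tau^p_e\in\mc C^1$ and compactness of $\mc Z$); both are routine and do not affect correctness.
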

\begin{proof}
	See Appendix \ref{app1}.
\end{proof}

Throughout the paper, we shall refer to $\Omega_0$ as the set of \emph{limit equilibria} of the routing game. Theorem \ref{thm:logit} states that the set of limit equilibria is a nonempty compact subset of Wardrop equilibria that includes all strict Wardrop equilibria (if any). 
Moreover, in addition to being approximated by fixed points of the logit dynamics, strict equilibria are also locally asymptotically stable under the dynamics in the vanishing noise limit.
\begin{remark}
	This result must be compared with the existing literature. It is known that interior evolutionary stable states $z$ of populations game admit a neighborhood of $z$ such that, for large enough $\eta$, there exists one and only one fixed point of logit$(\eta)$ \cite[Theorem 8.4.6]{sandholm2010population}. Moreover, such fixed points are locally asymptotically stable in the limit of vanishing noise. Although strict equilibria are evolutionary stable states, they are not interior, thus violating one of the assumptions of \cite[Theorem 8.4.6]{sandholm2010population} and making our result original.
\end{remark}

In the next part of this section
we investigate the asymptotic behaviour of the logit dynamics in the large noise limit.
The next result states that in this regime the logit dynamics admits a globally exponentially stable fixed point for every routing game.
\begin{theorem}
	\label{thm:noise}
	Let ($\mathcal{G}$, $\mathcal{P}$, $\tau,v$) be a heterogeneous routing game, and consider the corresponding logit$(\eta)$ defined in \eqref{eq:logit}. Then, there exists $\underline{\eta} > 0$ such that logit$(\eta)$ admits a globally exponentially stable fixed point for every $\eta \in (\underline\eta,+\infty)$. 
\end{theorem}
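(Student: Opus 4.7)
The plan is to rewrite the logit dynamics in the compact form $\dot z = G^\eta(z) - z$, where the map $G^\eta:\mc Z\to\mc Z$ is defined componentwise by
$$G^\eta(z)^p_i \;=\; v_p\,\frac{\exp(-c^p_i(z)/\eta)}{\sum_{j\in\mc R_p}\exp(-c^p_j(z)/\eta)}\,,$$
and then to show that, for $\eta$ large enough, $G^\eta$ is a contraction on the compact convex set $\mc Z$. Global exponential stability will follow from a standard contraction estimate comparing two trajectories.

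First I would verify that $\mc Z$ is positively invariant under logit$(\eta)$: summing \eqref{eq:logit} over $i\in\mc R_p$ gives $\sum_i \dot z^p_i = v_p - \sum_i z^p_i$, so each population simplex is preserved, and $\dot z^p_i > 0$ whenever $z^p_i = 0$, so non-negativity is preserved as well. This reduces the analysis to the compact set $\mc Z$. Next I would estimate the Lipschitz constant of $G^\eta$ on $\mc Z$. The softmax map has Jacobian $\mathrm{diag}(L) - LL^\top$, whose operator norm is bounded by a constant depending only on $\max_p |\mc R_p|$. Since $G^\eta$ composes this softmax with the map $c/\eta$, its Jacobian with respect to $c$ scales as $1/\eta$. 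Moreover, the cost function $c:\mc Z\to\R^{\sum_p|\mc R_p|}$ defined in \eqref{cost} is $C^1$ on the compact set $\mc Z$ (its Jacobian has entries $\sum_e A^p_{er}(\tau^p_e)'(f_e)A^q_{es}$, which are bounded since $(\tau^p_e)'$ is continuous on the bounded range of $f$), hence Lipschitz with some $\eta$-independent constant $C_c$. Combining these facts, $G^\eta$ is Lipschitz on $\mc Z$ with constant $\kappa(\eta)\le C/\eta$ for some constant $C = C(\mc G,\mc P,\tau,v)$ independent of $\eta$.

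Choose $\underline\eta > 0$ so that $\kappa(\eta) < 1$ for every $\eta > \underline\eta$. For such $\eta$, Banach's fixed point theorem applied to the contraction $G^\eta:\mc Z\to\mc Z$ on the complete metric space $(\mc Z,\|\cdot\|)$ yields a unique fixed point $z^\eta\in\mc Z$. Finally, for any two trajectories $z_1(t), z_2(t)$ of logit$(\eta)$ in $\mc Z$, the function $V(t) = \|z_1(t)-z_2(t)\|^2$ satisfies, by Cauchy--Schwarz,
$$\dot V \;=\; 2\langle z_1-z_2,\,G^\eta(z_1)-G^\eta(z_2)\rangle \,-\, 2\|z_1-z_2\|^2 \;\le\; -2\bigl(1-\kappa(\eta)\bigr) V\,.$$
Taking $z_2(t)\equiv z^\eta$, Grönwall's inequality gives $\|z(t)-z^\eta\|\le e^{-(1-\kappa(\eta))t}\|z(0)-z^\eta\|$ for every initial condition $z(0)\in\mc Z$, establishing global exponential stability of $z^\eta$.

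The main technical point is the sharp scaling $\kappa(\eta) = O(1/\eta)$, which rests on (a) the precise $1/\eta$ factor introduced by differentiating the softmax in $c$, and (b) the fact that $c$ has an $\eta$-independent Lipschitz constant on the compact domain $\mc Z$. Once this bound is in hand, the existence/uniqueness via Banach and the exponential contraction via the quadratic Lyapunov function are routine; the slight subtlety is only that one works on the product of simplices rather than on the full Euclidean space, but this is handled by the invariance argument of the first step.
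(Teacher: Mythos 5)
Your proof is correct, and it reaches the same destination by a technically different route. The paper's argument and yours both rest on the same key observation---that the Jacobian of the softmax map $G(\cdot,\eta)$ with respect to $z$ carries a $1/\eta$ factor and therefore vanishes uniformly on the compact set $\mc Z$ as $\eta\to+\infty$, so that the vector field becomes a small perturbation of $\dot z = \mathrm{const} - z$. Where you diverge is in the contraction machinery: the paper works in the $\ell_1$ norm, proving a separate proposition that negative column diagonal dominance of the Jacobian of $g = G - z$ implies $\ell_1$-contractivity (and invoking an external result for the existence of the globally exponentially stable fixed point), and then checks that $J_{g,z}\to-\mb I$ gives the required dominance for large $\eta$. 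You instead work in $\ell_2$: you quantify the Lipschitz constant of $G(\cdot,\eta)$ as $\kappa(\eta)=O(1/\eta)$, get existence and uniqueness of the fixed point directly from Banach's theorem on the compact convex invariant set $\mc Z$, and obtain the exponential contraction of trajectories from the quadratic Lyapunov function together with Cauchy--Schwarz and Gr\"onwall. Your version is more self-contained (no appeal to contraction-theory results beyond Banach) and yields an explicit $O(1/\eta)$ scaling for the threshold $\underline\eta$, whereas the paper's $\ell_1$ matrix-measure formulation is stated as a reusable proposition of independent interest. Both arguments are sound; your explicit verification of the positive invariance of $\mc Z$ and of the $\eta$-independent Lipschitz bound on the cost map $c$ closes the only points where the estimate could have gone wrong.
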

\begin{proof}
	See Appendix \ref{app3}.
\end{proof}

The results established in Theorems \ref{thm:logit}-\ref{thm:noise} characterize the behaviour of the logit dynamics in heterogeneous routing games independently of the network topology, and explain the numerical simulations of Example \ref{sec:example}. In particular, the theorems suggest that, if a heterogeneous routing game admits multiple strict equilibria, then the logit dynamics admit a bifurcation, as shown in Example \ref{sec:example}.
We conclude this section with some remarks.
\begin{remark}
\label{remark:strict}
The behaviour of the logit dynamics in heterogeneous routing games must be compared with stability results in homogeneous routing games. The asymptotic global stability of equilibria in homogeneous routing games relies on the fact that homogeneous routing games admit a convex potential function $V(z)$ \cite{beckmann1956studies}, i.e., Wardrop equilibria $\wz$ correspond to solutions of the convex program
$$
\wz \in \argmin_{z \in \mc{Z}} V(z).
$$
The existence of a convex potential implies that 
$$
V_{\eta}(z):=V(z)+\eta\sum_{i \in \mc{R}} z_i \log\left(\frac{z_i}{v}\right),
$$
is a strictly convex Lyapunov function of logit($\eta$), hence the unique minimizer of $V_\eta$, denoted by $z_\eta$, is globally attractive for the dynamics. As $\eta$ tends to vanish, $\lim_{\eta \to 0^+} \text{dist}(z_\eta,\mc{Z}^*) = 0$, i.e., the asymptotically globally stable fixed of the dynamics converges to the set of the Wardrop equilibria of the game.
Observe that, since the potential $V(z)$ is convex, the set of the Wardrop equilibria is convex.
However, as stated for heterogeneous routing games in Theorem \ref{thm:logit}-(ii), the fixed points of the dynamics approach the set of the Wardrop equilibria of the game as the noise vanishes, but not every Wardrop equilibrium is an accumulation point of fixed points of the logit dynamics. Strict equilibria play a special role also in homogeneous routing games. Indeed, if a homogeneous routing game game admits a strict equilibrium $\wz$, then $\wz$ is isolated and $\mc{Z}^*$ is a singleton. Therefore, if a homogeneous routing game admits a strict equilibrium $\wz$, then $\wz$ is globally asymptotically stable in the vanishing noise limit. 
Observe that the local asymptotic stability of strict equilibria in heterogeneous routing games established in Theorem \ref{thm:logit}-(iii) is a weaker result compared to the global asymptotic stability established for homogeneous routing games. We remark that this limitation is an intrinsic property of heterogeneous routing games. Indeed, as illustrated in Example \ref{sec:example}, heterogeneous routing games may admit multiple strict equilibria, hence global asymptotic stability of strict equilibria does not hold in general.
\end{remark}
\begin{remark}
	Similar considerations as in Remark \ref{remark:strict} apply to heterogeneous routing games that admit a convex potential. While in general heterogeneous routing games do not admit a potential function \cite{sandholm2010population,dafermos1971extended}, if the delay functions satisfy the symmetry condition
	\begin{equation}
		\label{eq:symmetry}
		\sum_{e \in i \cap j} (\tau_e^p)' = \sum_{e \in i \cap j} (\tau_e^q)' \quad \forall  p,q \in \mc{P}, i \in \mc{R}_p, j \in \mc{R}_q,
	\end{equation}
    then the game admits a potential function.
	Such a condition is satisfied for instance if the populations differ only in the origin-destination pair, or if constant tolls are charged and the populations differ in the toll sensitivity, i.e., the delay functions (including tolls) are in the form
	\begin{equation}
		\label{eq:toll_sens}
		\tau_e^p(f_e) = \tau_e(f_e) + \alpha_p \omega_e.
	\end{equation}
	Indeed, one can prove that
	$$
	V(z) = \sum_{e \in \mc{E}} \int_0^{\sum_{p} (A^p z^{p})_e } \tau_e(s) ds + \sum_{p \in \mc{P}} \sum_{e \in \mc{E}} \alpha_p \omega_e (A^p z^p)_e
	$$
	is a convex potential function for this class of games. While in \cite{cole2003pricing,fleischer2004tolls} the existence and characterization of optimal tolls for this class of heterogeneous games are provided, the existence of a convex potential function guarantees that optimal flows are globally asymptotically stable under the logit dynamics when optimal tolls are charged. The existence of a potential function is lost if tolls are in feedback form instead of constant.
\end{remark}

\section{Conclusions and future research}
\label{sec:conclusion}
In this paper we investigate the asymptotic behaviour of the logit dynamics in heterogeneous routing games. We show that fixed points of the logit dynamics converge to a subset of Wardrop equilibria (called limit equilibria) in the vanishing noise limit, and that the set of the limit equilibria include all the strict equilibria of the game. Additionally, we show that strict equilibria are locally asymptotically stable in the vanishing noise limit. Finally, we show that the dynamics admits a globally asymptotically stable fixed point in the large noise limit. Those results together suggest that if a heterogeneous routing game admits multiple strict equilibria, then the logit dynamics exhibits a bifurcation as the noise varies, as shown in the numerical example of Example \ref{sec:example}.

Future research lines include the complete characterization of the limit equilibria in the vanishing noise limit. Our conjecture is that every connected component of equilibria admits one and only one limit equilibrium of the logit dynamics. While strict equilibria have been proven to be locally asymptotically stable, another open issue is a complete characterization of the asymptotically stable equilibria of heterogeneous routing game. 

Another interesting direction is the application of our theoretical results for the analysis of multi-scale dynamics, the design of dynamic feedback tolls, and the optimization of network interventions in heterogeneous congestion games. 
While these problems have been addressed, e.g., in \cite{Como.ea:2013}, \cite{Como.Maggistro:22}, and \cite{cianfanelli2021optimal}, respectively, for homogeneous preferences, we are not aware of extensions of these results to the heterogeneous case. 

\section{Acknowledgements}This work was partly supported by the Italian Ministry for University and Research through grants ``Dipartimenti
di Eccellenza 2018–2022'' [CUP: E11G18000350001] and Project PRIN 2017 ``Advanced Network Control of Future Smart Grids''
(http://vectors.dieti.unina.it), and by the {\it Fondazione Compagnia di San Paolo} through a Joint Research Project and  Project ``SMaILE''.

\bibliographystyle{IEEEtran}
\bibliography{references.bib}

\appendices

\section{Proof of Theorem \ref{thm:logit}}
\label{app1}
\emph{(i)} Consider the function $G:\mc{Z} \times (0,+\infty) \to \mc{Z}$, with components
\begin{equation*}
	G_r^p(z,\eta)=v_p\frac{\exp(-c_r^p(z)/\eta)}{\sum_{s \in \mathcal{R}_p} \exp(-c_s^p(z)/\eta)} \quad \forall p \in \mc{P}, r \in \mc{R}_p.
\end{equation*}
Notice that logit($\eta$) reads
\begin{equation}
	\label{eq:logit_short}
	\dot{z}_r^p = G_r^p(z,\eta)-z_r^p,
\end{equation}
hence elements of $\Omega_\eta$ coincide with fixed points of $G(\cdot,\eta)$. Observe also that for every $\eta > 0$, $G(\cdot,\eta)$ is continuous and maps the non-empty compact convex set $\mc{Z}$ in itself. Hence, Brouwer's fixed point theorem guarantees that $G(\cdot, \eta)$ admits at least one fixed point in $\mc{Z}$ \cite{border1985fixed}, i.e., the set of fixed points $\Omega_{\eta}$ is non-empty for every $\eta > 0$. Notice that $\Omega_{\eta}$ is bounded because $\mc{Z}$ is bounded and $\Omega_\eta \subseteq \mc{Z}$. Moreover, since it is the level set of a continuous function, $\Omega_{\eta}$ is closed. Therefore, the set $\Omega_{\eta}$ is compact for every $\eta > 0$.

\emph{(ii)}
First, observe that $\Omega_0$ is non-empty because $\mc Z$ is bounded, thus every sequence of elements in $\mc Z$ admits a converging subsequence.
Consider $\wz \in \Omega_0$, and the corresponding sequences $\eta_n$ (with $\eta_n \stackrel{n \to +\infty}{\longrightarrow} 0$) and $z_{n} \in\Omega_{\eta_n}$ (with $z_{n}\stackrel{n \to +\infty}{\longrightarrow} \wz$).
Consider a suboptimal route $r$ for population $p$ under $\wz$, i.e., a route $r \in \mc{R}_p$ such that $c_s^p(\wz)<c_r^p(\wz)$ for some $s \in \mc{R}_p$. Then, 
\begin{equation}
	\lim_{n \rightarrow +\infty} G_r^p(z_n,\eta_n) = 0.
	\label{eq:lim}
\end{equation}
From \eqref{eq:logit_short} and \eqref{eq:lim} it follows
\begin{equation*}
	\wz_r^p=\lim_{n \to +\infty}(z_{n})_r^p
	= \lim_{n \to +\infty} G_r^p(z_n,\eta_n)=  
	0.
\end{equation*}
Hence, $\wz$ is a Wardrop equilibrium according to Definition \ref{def:wardrop}. This implies in particular that $\Omega_0\subseteq\mc Z^*$.
Since $\Omega_0$ is bounded, to establish compactness of $\Omega_0$ we need to prove that $\Omega_0$ is closed. To this end, consider a converging sequence $(z_k)_k$ with $z_k \in \Omega_0$ for every $k$, and let $\wz$ denote the limit of the sequence, i.e., $z_k \stackrel{k \to +\infty}{\longrightarrow} \wz$. Our goal is to prove that $\wz \in \Omega_0$. By definition of $\Omega_0$, for every $k$, there exist two convergence sequences $(\eta_m)_m$ (with $\eta_m \stackrel{m \to +\infty}{\longrightarrow} 0$) and $(z_k^m)_m$ (with $z_k^m \in \Omega_{\eta_m})$ such that $z_k^m \stackrel{m \to +\infty}{\longrightarrow} z_k$.
This implies the existence of $m_k$ such that
$$
|z_k^{m_k}-z_k| < \frac{1}{k}.
$$
We now prove that $z_k^{m_k} \stackrel{k \to +\infty}{\longrightarrow} \wz$. For every $\epsilon>0$, we take $k$ such that $|z_k - \wz| < \epsilon/2$ and $k>2/\epsilon$. Then,
\begin{equation}
	\label{eq:omega0}
|z_k^{m_k} - \wz| \le |z_k^{m_k} - z_k| + |z_k - \wz| < \frac{1}{k} + \frac{\epsilon}{2} < \epsilon.
\end{equation}
Since $z_k^{m_k}$ is by construction a fixed point of logit$(\eta_{m_k})$, \eqref{eq:omega0} shows that $\wz$ is an accumulation point of fixed points of the logit dynamics and thus is contained in $\Omega_0$, concluding the proof.

\emph{(iii)} Consider a strict equilibrium $\wz \in \mc{Z}^*_s$, and let $r_p$ denote the optimal route for population $p$, i.e., $\wz^p = v_p \delta^{(r_p)}$ for every $p \in \mc{P}$. For every $\epsilon\ge0$, let
\begin{equation*}
	O_\epsilon = \{z \in \mc{Z}:  z^{p}_{r_p} \ge v_p(1-\epsilon) \ \forall p \in \mc{P}\},
\end{equation*} 
be the set of route flows such that at least a fraction $1-\epsilon$ of agents of every population $p$ use its optimal route $r_p$. Note that $\wz \in O_{\epsilon}$ for every $\epsilon \ge 0$. Let
\begin{equation*}
	\alpha := \min_{p \in \mc{P}}\min_{s \in \mc{R}_p \setminus \{r_p\}}[c_s^p(\wz)-c_{r_p}^p(\wz)]>0.
\end{equation*}
Note that $\alpha>0$ is a consequence of $\wz$ being strict.
We then define $\overline{\epsilon}$ to be the largest $\epsilon$ such that for every $z \in O_\epsilon$, for every population $p$ and route $s \in \mc{R}_p \setminus \{r_p\}$, the difference between the cost of route $s$ and the cost of route $r_p$ is at least $\alpha/2$, i.e.,
\begin{equation*}
	\overline{\epsilon}=\max\Big\{\epsilon \ge 0: \min_{z \in O_\epsilon} \ \min_{p \in \mc{P}}\min_{s \in \mc{R}_p \setminus \{r_p\}} [c_s^p(z)-c_{r_p}^p(z)]\ge \frac{\alpha}{2}\Big\}.
\end{equation*} 
Note that $\overline{\epsilon}>0$, since the equilibrium is strict. We now show that for every $\epsilon \in (0,\overline{\epsilon}]$ there exists $\eta_\epsilon$ such that, for every $\eta\in (0,\eta_\epsilon]$, $G(\cdot,\eta)$ maps $O_\epsilon$ in itself. To this end, observe that for every $\epsilon \in (0, \overline{\epsilon}]$ and population $p \in \mc P$, the route $r_p$ is strictly optimal for every flow $z \in O_\epsilon$ by construction (recall the definition of $\overline{\epsilon}$), i.e.,
\begin{equation}
	\label{eq:strict_intorno}
	c_{r_p}(z) < c_s(z) \quad \forall z \in O_\epsilon, p \in \mc{P}, s \in \mc{R}_p \setminus \{r_p\}. 
\end{equation}
Thus, for every $z \in O_\epsilon$ and population $p \in \mc{P}$
\begin{equation}
	\label{eq:lim_eta}
	\lim_{\eta \to 0^+}G_{i}^p(z,\eta) = \begin{cases}
		v_p \quad &\text{if} \ i=r_p,\\
		0 \quad &\text{if} \ i \in \mc{R}_p \setminus \{r_p\}.
	\end{cases}
\end{equation}
Note that the right term of \eqref{eq:lim_eta} corresponds to $\wz$.
since $O_\epsilon$ tends to $\wz$ as $\epsilon \to 0$, this implies by continuity of $G$ in $\eta$ that for every $\epsilon \in (0,\overline{\epsilon}]$ there exists a small enough $\eta_\epsilon$ such that for $\eta \in (0,\eta_\epsilon]$, $G(\cdot,\eta)$ maps $O_\epsilon$ in itself. Since $O_\epsilon$ is compact and convex, Brower's fixed point theorem ensures the existence of at least a fixed point of $G(\cdot,\eta)$ in $O_\epsilon$ for every $\eta \in (0,\eta_\epsilon]$. Since the argument holds for every small enough $\epsilon$, then there exists a sequence of fixed points $z_n$ such that $\lim_{n \to + \infty} z_n = \wz$, showing that $\wz \in \Omega_0$.
To prove the second part of (iii), 
let us write the logit dynamics in the form 
$$
\dot{z} = G(z,\eta)-z,
$$
with $G: \mc{Z} \times (0,+\infty) \to \mc{Z}$. 
We now extend $G$ to include the limit value $\eta=0$, while restricting $z \in O_{\overline\epsilon}$. Formally, let us define $\overline{G} : O_{\overline{\epsilon}} \times [0,+\infty) \to \mc{Z}$ as
$$
\overline{G}(z,\eta) = 
\begin{cases}
	G(z,\eta) \quad & \text{if} \ \eta>0 \\
	\wz & \text{if} \  \eta=0.	
\end{cases}
$$
We now show that $\overline{G} \in \mc C^1$. 
The continuity follows from \eqref{eq:strict_intorno}, which implies that $\lim_{\eta \to 0^+} G(z,\eta) = \wz$ for every $z \in O_{\overline\epsilon}$ (see \eqref{eq:lim_eta} for details). Restricting the route flow space from $\mc Z$ to $O_{\overline\epsilon}$ is needed to ensure that the optimal route is unique and independent of $z$ for all populations, thus avoiding discontinuities of $\overline G(z,0)$ in $z$. To prove continuous differentiability, let us define $\Delta_{si}^p(z) = c_s^p(z)-c_i^p(z)$, and let $J_{G,z}(z,\eta)$ denote the Jacobian of $G(z,\eta)$ with respect to $z$. Since $G \in \mc C^1$, to prove that $\overline G \in \mc C^1$ we must investigate its differentiability in $\eta=0$. We first show that $\overline{G}(z,\eta)$ is continuously differentiable with respect to $z$. To this end, our goal is to prove that
\begin{equation}
	\label{eq:J}
	\lim_{\eta \to 0^+}J_{G,z}(z,\eta) = \mb{0}, \quad \forall z \in O_{\overline\epsilon},
\end{equation} 
where the components of $J_{G,z}(z,\eta)$ read
\begin{equation}
	\begin{aligned}
		\label{eq:jac3}
		\frac{\partial G^p_i (z,\eta)}{\partial z^q_j}=\,&
		\frac{v_p}{\eta}\frac{e^{-c_i^p(z)/\eta}\sum_{s \in \mc R_p \setminus \{i\}} \frac{\partial \Delta_{si}^p(z)}{\partial z_j^{q}} e^{- c_s^p(z)/\eta}}{(\sum_{r \in \mc R_p} e^{-c_r^p(z)/\eta})^2}.
	\end{aligned}
\end{equation}
Observe that for every $p,q \in \mc{P}$, $i \in \mc{R}_p \setminus \{r_p\}$, $s \in \mc{R}_p$ and $z \in O_{\overline\epsilon}$ it follows from \eqref{eq:strict_intorno} that
$$
\lim_{\eta \to 0^+} \frac{e^{-c_i^p(z)/\eta}e^{-c_s^p(z)/\eta}}{(\sum_{r \in \mc R_p} e^{-c_r^p(z)/\eta})^2} = 0.
$$
The previous equation implies by \eqref{eq:jac3} that for every $z \in O_{\overline\epsilon}$, if $i \in \mc{R}_p \setminus \{r_p\}$, then
\begin{equation}
	\label{eq:piece1}
	\lim_{\eta \to 0^+}\frac{\partial G^p_i (z,\eta)}{\partial z^q_j} = 0, \quad \forall p,q \in \mc{P}, i \in \mc{R}_p \setminus \{r_p\}, j \in \mc{R}_q.
\end{equation}
On the other hand, since for every $p,q \in \mc{P}$, $s \in \mc{R}_p \setminus \{r_p\}$ and $z \in O_{\overline\epsilon}$ it holds
$$
\lim_{\eta \to 0^+} \frac{e^{-c_{r_p}(z)/\eta}e^{-c_s^p(z)/\eta}}{(\sum_{r \in \mc R_p} e^{-c_r^p(z)/\eta})^2} = 0,
$$
it follows from \eqref{eq:jac3} that
\begin{equation}
	\label{eq:piece2}
\lim_{\eta \to 0^+} \frac{\partial G^p_{r_p}  (z,\eta)}{\partial z^q_j} = 0, \quad \forall p,q \in \mc{P}, j \in \mc{R}_q, z \in O_{\overline\epsilon}.
\end{equation}
Eq. \eqref{eq:piece1} and \eqref{eq:piece2} prove \eqref{eq:J}, which in turn implies that $\overline{G}(z,\eta)$ is continuously differentiable with respect to $z$ for every $z \in O_{\overline \epsilon}$ and $\eta \in [0,+\infty)$. 
We now prove the continuous differentiability of $\overline G(z,\eta)$ with respect to $\eta$ for every $z \in O_{\overline{\epsilon}}$ and $\eta \in [0,+\infty)$, i.e., we prove that
$$
\lim_{\eta \to 0^+}\frac{\partial G_i^p(z,\eta)}{\partial \eta} = \mb 0, \quad \forall p \in \mc P, i \in \mc R_p, z \in O_{\overline\epsilon}.
$$ Since $G \in \mc C^1$ we just need to discuss the continuous differentiability in $\eta=0$. To this end, let us write the components of the Jacobian in the form
\begin{equation}
	\label{eq:der_eta}
	\frac{\partial G_i^p(z,\eta)}{\partial \eta} = -\frac{v_p \sum_{s \neq i} \Delta_{si}^p(z) e^{-\Delta_{si}^p(z)/\eta}}{\eta^2(1+\sum_{r \neq i} e^{-\Delta_{ri}^p(z)/\eta})^2}.
\end{equation}
Again, we split the analysis in two parts. If $i = r_p$, then we have $\Delta_{si}^p(z)>0$ for every $s \neq i$ and $z \in O_{\overline \epsilon}$, which implies by \eqref{eq:der_eta} that
\begin{equation}
	\label{eq:der_eta_rp}
	\lim_{\eta \to 0^+}\frac{\partial G_{r_p}^p(z,\eta)}{\partial \eta} = 0, \quad \forall p \in \mc{P}, z \in O_{\overline\epsilon}.
\end{equation}
Instead, for every $i \neq r_p$ we get that, as $\eta \to 0^+$, the numerator and the denominator in \eqref{eq:der_eta} are dominated respectively by term with $s = r_p$ and $r = r_p$, with $\Delta_{r_p i}(z) < 0$ for every $z \in O_{\overline\epsilon}$, yielding
\begin{equation}
	\label{eq:der_eta_i}
	\lim_{\eta \to 0^+}\frac{\partial G_{i}^p(z,\eta)}{\partial \eta} = 0 \quad \forall i \in \mc{R}_p \setminus \{r_p\}, z \in O_{\overline\epsilon}.
\end{equation}
Eq. \eqref{eq:der_eta_rp} and \eqref{eq:der_eta_i} imply that $\overline{G}(z,\eta)$ is continuously differentiable with respect to $\eta$, proving that $\overline{G} \in \mc C^1$.
To conclude the proof, let us define $g(z,\eta) = G(z,\eta)-z$ and $\overline{g}(z,\eta) = \overline{G}(z,\eta)-z$. Notice that zeros of $g(\cdot,\eta)$ coincide with elements in $\Omega_\eta$ and $\overline{g}(\wz,0)=0$. The existence of a family of fixed points $(z_{\eta})_{\eta < \tilde\eta}$ such that 
$$
\lim_{\eta\to 0^+}z_{\eta}=\wz,
$$  
follows from the implicit function theorem applied to the function $\overline{g}$ in $(\wz,0)$ \cite{rudin1976principles}. To prove the asymptotic stability of this family of fixed points, notice that
$$
J_{g,z}(z,\eta) = J_{G,z}(z,\eta) - \mb{I}.
$$
Since for every $\epsilon \in (0,\overline{\epsilon}]$ there exists $\eta_\epsilon$ such that $z_\eta \in O_\epsilon$ for every $(0,\eta_\epsilon]$, it follows from \eqref{eq:J} that
$$
\lim_{\eta \to 0^+} J_{g,z}(z_\eta,\eta) = -\mb{I},
$$
which implies linear stability (and then local asymptotic stability)) of $z_\eta$ as $\eta \to 0^+$.

\section{Proof of Theorem \ref{thm:noise}}
\label{app3}
We first establish a result on contractive systems. The result is not original and may be found in \cite{jafarpour2020weak}. Still, we provide an alternative and more intuitive proof. Our proof borrows techniques from \cite[Lemma 5]{lovisari2014stability}, where the authors prove that every monotone diagonally dominant system is $l_1$-contractive. Proposition \ref{prp:contractivity} generalizes this result, proving that the Jacobian with negative diagonally dominant columns is a sufficient condition for $l_1$-contractivity.
\begin{proposition}
	\label{prp:contractivity}
	Let $\dot x=g(x)$ be a continuous-time dynamical system. Assume that $g:\mathds{R}^n \to \mathds{R}^n$ is continuously differentiable in $\mc{X} \subseteq \mathds{R}^n$. Let $J(x)$ denote the Jacobian of $g$, and let 
	$$
	\max_{j \in \{1,...,n\}} \left(J_{jj}(x)+\sum_{i:i\neq j} |J_{ij}(x)|\right) \le -c \quad \forall x \in \mc{X}.
	$$ 
	Assume that $\mc{X}$ is $g$-invariant, and let $x(t)$ and $y(t)$ denote the trajectories at time $t$ corresponding to initial conditions $x(0)=x_0 \in \mc{X}$ and $y(0)=y_0 \in \mc{X}$, respectively. Then,
	\begin{enumerate}
		\item for every $t \ge 0$
		\begin{equation}
			\label{eq:l1_contract}
			||x(t) - y(t)||_1 \le e^{-ct} ||x_0-y_0||_1,
		\end{equation}
		\item There exists a globally exponentially stable fixed point in $\mc{X}$.
	\end{enumerate}
\end{proposition}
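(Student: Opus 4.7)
The plan is to establish the $\ell_1$-contraction estimate (1) as the core inequality, from which part (2) follows by a standard Banach fixed-point argument. To begin, I would fix two initial conditions $x_0, y_0 \in \mc{X}$, set $\phi(t) = x(t) - y(t)$, and apply the fundamental theorem of calculus along the segment $s \mapsto s\,x(t) + (1-s)\,y(t)$ to write
\[
\dot\phi(t) = g(x(t)) - g(y(t)) = M(t)\,\phi(t), \qquad M(t) := \int_0^1 J\bigl(s\,x(t) + (1-s)\,y(t)\bigr)\,\dd s.
\]
The key observation is that the column-dominance hypothesis is preserved by this convex-combination-then-integration operation: by the triangle inequality inside the integral, $M_{jj}(t) + \sum_{i \ne j} |M_{ij}(t)| \le -c$ for every index $j$ and every $t$.

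The next step is to bound the derivative of $\|\phi(t)\|_1$. On the full-measure set of times where every coordinate $\phi_j(t)$ is nonzero, I would differentiate the $\ell_1$ norm via the chain rule, swap the order of summation, and exploit $\phi_i = \text{sgn}(\phi_i)\,|\phi_i|$ to obtain
\[
\frac{\dd}{\dd t}\|\phi(t)\|_1 \;=\; \sum_i |\phi_i| \left(M_{ii}(t) + \sum_{j \ne i} \text{sgn}(\phi_i)\,\text{sgn}(\phi_j)\,M_{ji}(t)\right) \;\le\; -c\,\|\phi(t)\|_1,
\]
where the inequality uses $\text{sgn}(\phi_i)\text{sgn}(\phi_j)\,M_{ji} \le |M_{ji}|$ together with the preserved column-dominance of $M(t)$. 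Gronwall's inequality then yields \eqref{eq:l1_contract}. The nondifferentiability of the sign function at the coordinate hyperplanes is handled by working instead with the upper Dini derivative of $\|\phi\|_1$, which admits the same upper bound; this is precisely the mild technical trick borrowed from the Lovisari et al.\ argument the authors cite.

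For part (2), the contraction estimate implies that for any fixed $T > 0$, the time-$T$ flow map $\Phi_T : \mc{X} \to \mc{X}$ (well defined thanks to the $g$-invariance of $\mc{X}$) is a strict $\ell_1$-contraction with factor $e^{-cT} < 1$. Taking $\mc{X}$ to be closed (so complete as a subset of $\mathds{R}^n$), Banach's fixed-point theorem produces a unique $x^\star \in \mc{X}$ with $\Phi_T(x^\star) = x^\star$. The semigroup identity $\Phi_t \circ \Phi_T = \Phi_T \circ \Phi_t$ combined with uniqueness of the fixed point of $\Phi_T$ forces $\Phi_t(x^\star) = x^\star$ for every $t \ge 0$, so $x^\star$ is an equilibrium of $\dot x = g(x)$. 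Applying \eqref{eq:l1_contract} with $y(t) \equiv x^\star$ then yields global exponential stability of $x^\star$ in $\mc{X}$ with rate $c$.

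The main obstacle I anticipate is ensuring that the segment $s\,x(t) + (1-s)\,y(t)$ lies inside $\mc{X}$, which is needed so that $J$ is defined and column-dominant along it. This effectively requires convexity of $\mc{X}$, a hypothesis that appears implicit in the statement and is satisfied in the intended applications; otherwise one must either extend the Jacobian bound to a convex enlargement of $\mc{X}$ or replace the integrated-Jacobian representation by a direct Dini-derivative bound that uses only the difference $g(x)-g(y)$ evaluated on trajectories, which is more delicate.
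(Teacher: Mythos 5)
Your proof of part (1) is correct and follows essentially the same route as the paper: represent $g(x)-g(y)$ as the integral of the Jacobian along the segment joining $x$ and $y$, differentiate the $\ell_1$ norm using signs, and swap the order of summation to invoke the negative column-dominance bound; your inequality $\sum_i |\phi_i|\bigl(M_{ii}+\sum_{j\neq i}\mathrm{sgn}(\phi_i)\mathrm{sgn}(\phi_j)M_{ji}\bigr)\le -c\|\phi\|_1$ is exactly the paper's chain of estimates in matrix form. You are in fact more careful on two points the paper glosses over: the nondifferentiability of $t\mapsto\|\phi(t)\|_1$ when a coordinate changes sign (handled via the upper Dini derivative) and the implicit need for $\mc X$ to be convex so that the segment stays where the Jacobian bound holds --- both are genuine, if minor, gaps in the published argument that your version repairs. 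The one substantive divergence is part (2): the paper simply cites an external result (Theorem 13 of the Jafarpour--Bullo reference), whereas you give a self-contained and standard derivation via Banach's fixed-point theorem applied to the time-$T$ flow map together with the semigroup identity; this buys a complete proof at the cost of needing $\mc X$ closed (and nonempty), an assumption you correctly flag as implicit in the statement.
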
 
\begin{proof}
	For simplicity of notation we omit the dependence on $t$.	By definition of the $l_1$-norm and the linearity of the derivative, we get 
	\begin{equation} \label{eq1}
		\begin{aligned}
			\frac{d}{dt}\|x-y\|_1   = &\frac{d}{dt} \sum_{i} |x_i-y_i|=\sum_{i}\frac{d}{dt}|x_i-y_i|\\
			= & \sum_{i}\text{sign}(x_i-y_i)(\dot{x_i}-\dot{y_i})\\
			= & \sum_{i}\text{sign}(x_i-y_i)(g_i(x)-g_i(y))\\
			= &  \sum_{i}\text{sign}(x_i-y_i)(g_i(y+h)-g_i(y)),
		\end{aligned}																	
	\end{equation} 
	where $x=y+h$. From 
	\begin{equation*}
		\begin{aligned}
			g_i(y+h)-g_i(y)=& \int_0 ^1 \frac{dg_i(y+\tau h)}{d\tau} d\tau \\
			=&\int_0 ^1 \sum_{j}{\frac{\partial{g_i}}{\partial{y_j}}}h_j d\tau,
		\end{aligned}	
	\end{equation*}
	thus (\ref{eq1}) is equal to 
	\begin{equation*} \label{eq2}
		\int_0 ^1 \sum_{i}\text{sign}(h_i) \sum_{j}{\frac{\partial{g_i}}{\partial{z_j}}}h_j d\tau.
	\end{equation*}
	It holds that $\sum_{i}\text{sign}(h_i) \sum_{j}\frac{\partial g_i}{\partial y_j}h_j$ is equal to
	\begin{equation*}
		\begin{aligned}
			& \sum_{i}\left( \sum_{j \neq i} \frac{\partial g_i}{\partial y_j} h_j \text{sign}(h_i) + \frac{\partial g_i}{\partial y_i}|h_i|\right)\\
			\le &\sum_{i}\left( \sum_{j \neq i} \left|\frac{\partial g_i}{\partial y_j}\right| |h_j| + \frac{\partial g_i}{\partial y_i}|h_i|\right) \\
			= &\sum_{j} \sum_{i \neq j} \left|\frac{\partial g_i}{\partial y_j}\right| |h_j| + \sum_j \frac{\partial g_j}{\partial y_j}|h_j|\\
			= &\sum_j |h_j|\left(\sum_{i \neq j} \left|\frac{\partial g_i}{\partial y_j}\right| +\frac{\partial g_j}{\partial y_j} \right)\\
			\le & -||h||_1 c = -||x-y||_1 c.
		\end{aligned}
	\end{equation*}
	Plugging this in \eqref{eq1}, we get
	\begin{equation}
		\frac{d}{dt}\|x-y\|_1 \le -c ||x-y||_1,
	\end{equation}
	which implies \eqref{eq:l1_contract}. For point 2) we refer to \cite[Theorem 13]{jafarpour2020weak}.
\end{proof}
We can now proceed to the proof of Theorem \ref{thm:noise}.
Similarly to what done in the proof of Theorem \ref{thm:logit}-iii), we write logit$(\eta)$ in the form $\dot{z} = g(z,\eta) = G(z,\eta) - z$, and the Jacobian of $g$ as
\begin{equation*}
	J_{g,z}(z,\eta)=J_{G,z}(z,\eta)-\mb{I}.
\end{equation*}
Observe from \eqref{eq:jac3} that $J_{G,z}(z,\eta) \stackrel{\eta \to +\infty}{\longrightarrow} \mb 0$ for every $z \in \mc{Z}$ independently of the considered routing game. It thus follows that 
\begin{equation}
	\label{eq:Jg}
	\lim_{\eta \to +\infty}J_{g,z}(z,\eta)=-\mb{I} \quad \forall z \in \mc{Z}.
\end{equation}
With a slight abuse of notation, let from now on $J$ denote $J_{g,z}$.
From \eqref{eq:Jg}, it follows
\begin{align*}
	\lim_{\eta \to +\infty} \max_{j} \big(J_{jj}(z,\eta)+\sum_{i:i\neq j} |J_{ij}(z,\eta)|\big) = -1 \quad \forall z \in \mc{Z}.
\end{align*}
Since $J(z,\eta)$ is continuously differentiable in $\eta$, it follows that for every $k \in (0,1]$ there exists $\eta_k \ge 0$ such that for every $z \in \mc{Z}$ and $\eta \in [\eta_k,+\infty)$,
\begin{equation*}
	\label{eq:measure}
	\max_{j} \big(J_{jj}(z,\eta)+\sum_{i:i\neq j} |J_{ij}(z,\eta)|\big) \le -k \quad \forall \ z \in \mc{Z}.
\end{equation*}
Let $\eta_0$ be the largest $\eta>0$ such that
$$
\max_{z \in \mc{Z}} \max_{j} \left(J_{jj}(z,\eta)+\sum_{i\neq j} |J_{ij}(z,\eta)|\right) = 0.
$$
Thus, the existence of a globally exponentially stable fixed point for every $\eta \in (\underline{\eta},+\infty)$ follows from Proposition \ref{prp:contractivity} and from identifying $\underline{\eta}$ with $\eta_0$.

\end{document}